\theoremstyle{plain}
\newtheorem{theorem}{Theorem}
\theoremstyle{definition}
\newtheorem{definition}{Definition}
\newif\ifNotUse  
 \newif\ifNotUse  
\begin{document}
\title{GPU based Real-time Super Hosts Detection at Distributed Edge Routers}
\author{\IEEEauthorblockN{Jie Xu
                          ,Wei Ding
                          ,Xiaoyan Hu
                                  }
\IEEEauthorblockA{
School of Computer Science and Engineering\\ 
Southeast University \\
Nanjing, China\\
Email: xujieip@163.com} 

}

\maketitle

\begin{abstract}
The super host is a special host on the network which contacts with many other hosts during a certain time window. They play important roles in network researches such as scanners detection, resource allocation, spam filtering and so on. How to find super hosts in real time is the foundation of these applications. In this paper, a novel algorithm, denoted as CBAA, is proposed to solve this problem at edge routers. CBAA divides network traffic into different parts. A cube of bits array is devised to store hosts’ linking information of different traffic parts when scanning packets. At the end of each time window, CBAA restores super hosts very fast because there are only a fraction of super hosts in each traffic part. CBAA is also a parallel algorithm. It’s easy to deploy CBAA in GPU to deal with high-speed network traffic in real time. Experiments on a real-world core network prove the advantage of our algorithm.
\end{abstract} 
 
\begin{IEEEkeywords}
super hosts detection, GPGPU, network monitor, parallel computing, scanner detection, DDoS

\end{IEEEkeywords}

\section{Introduction}
Host's cardinality is one of the most important network attributes, which means the number of other hosts communicated with it during a time window. Host, in this paper, represents a computer(or virtual machine, such as a cloud server) in a network with a unique IP address. 

A super host is a host whose cardinality is larger than a threshold during a time window. Although super hosts take a small part of the hosts (no more than 0.1\%), they play an important role in the network. Super host detection has a wide application in the field of network management and security such as DDoS detection, scanners location, instruction detection and so on.

Being able to detect super hosts at core network's edge routers(e.g., 7000 Gb/s \cite{Report:ChinaNet}) precisely is a burden because updating hosts' states for every coming packet in real time requires high-speed processors and memory (such as high-frequency CPU cores and SRAM). What's more, traffic of recent backbone network passed through several edge routers for burden balance or security reasons. Multi-edge routers force us to detect super host in distributed servers because it's too expensive to collect all packets into a global traffic. In a distributed environment, packets will be processed at different local servers and necessary data be sent to a global server for super hosts detection. A small data structure will reduce the communication time between local servers and global server. 

Most of the previous researchers focused on how to mining super hosts on a local server\cite{HSD:samplingCountingGeometricSteams} \cite{HSD:sampleFlowDistributionEstimate} \cite{HSD:sampleAdaptivePacketSamplForFlowMeasurement}. Fortunately, some of these algorithms can be applied in a distributed situation after modifying.
Because small memory requirement is a goal of previous super host detection algorithms and small data structure size means littler communication time in the distributed environment. 

For the high-speed network, if we want to deal with every packet in time, the processing speed should be as fast as the link speed. If the processing speed is slower than packets arriving speed, a larger packets buffer will be used or some packets will be lost. But previous algorithms, which can be modified for the distributed application, cannot reach nowadays line speed because they try to improve speed only depending on high-speed memory and ignore the potential computing complexity. 
Parallel computing, such as GPGPU, would be a good solution to this problem\cite{GPU:WireSpeedNameLookup}\cite{PC2016:AHighThroughputDPIEngineOnGPUviaAlgorithmImplementationCoOptimization}\cite{PC2015:AHolisticApproachToBuildRealTimeStreamProcessingSystemWithGPU}\cite{PC2017:ExploitingHeterogeneityOfCommunicationChannelsForEfficientGPUSelectionOnMultiGPUNodes}.

Graphics processing unit (GPU) is a specialized device on a computer. As suggested by the name, its intent is to accelerate the graphic process. GPU contains huge operating unit and has more memory controller than CPU. This means GPU can deal with several data parallel and access memory with smaller total latency\cite{GPU:OptimizationPrinciplesApplicationPerformanceCUDA} \cite{GPU:ImplementationsJacobiAlgorithm}. It is more attractive than CPUs for high-speed traffic super hosts detection. 

In this paper, we research the performance of GPU on super hosts detection and proposed a faster and more accuracy algorithm which can be deployed on distributed environment. We make the following contributions in this paper:
\begin{enumerate}
\item   We propose a novel paradigm to detect super hosts in high-speed network by GPU. On a common GPU, super hosts hiding in huge traffic of core network could be detected in real time by our paradigm.

\item	We designed a fast and memory efficient algorithm to detect super hosts. The novel algorithm has a faster speed and higher overall accuracy. Our novel algorithm can be deployed in a distributed environment and its super hosts detection speed is the fastest among previous algorithms when running on the same platform.

\item	We implement our novel algorithm and previous ones on GPU by our paradigm. With our novel algorithm, a desktop GPU can detect super hosts hiding in a real world core network which has throughput as high as 40 Gb/s.
\end{enumerate}

The remainder of the paper is organized as follows: next section provides background information on super hosts detection and related approaches. In section 3, we describe our novel super host's detection algorithm. We explain how to implement super hosts detection on GPU in section 4. Experiments are presented in section 5. We make a conclusion in the last section. 

\section{Related work}
Super hosts detection is a hot topic in network management and network security \cite{CDE:TrackCardinalityDistribution} \cite{HSD:identifyHighCardinalityHosts}.

Chen et al\cite{HSD:LineSpeedAccurateSuperspreaderIdentificationDynamicErrorCompensation} used a bit array to record new flows and a counting bloom filter to estimate hosts' cardinality. The accuracy of this algorithm depends on the new flow identifying method. At a single node, a bit array could make sure that a flow update counting bloom filter at most once. But in the distributed environment, a flow may appear in several routers which caused the counting bloom filter containing error information. Chen claimed that the algorithm could process 2 million packets per second on SRAM. But this speed was still too slow to handle high-speed network traffic.

Wang et al \cite{HSD:ADataStreamingMethodMonitorHostConnectionDegreeHighSpeed} proposed a super host detection algorithm based on a novel structure called double connection degree sketch (DCDS). This algorithm contained two kinds of bit arrays which were used for hosts recovering and high cardinality estimation. DCDS adopted Chinese Remainder Theorem (CRT) to recover HCHs from bits array smartly. The implementation of CRT required large computation operators which limited its speed. 

Liu et al\cite{HSD:DetectionSuperpointsVectorBloomFilter} proposed a novel data streaming method called VBFA. The superior performance of VBFA came from a new data structure, called vector bloom filter (VBF), which was a variant of the standard bloom filter. VBF consisted of six hash functions, four of which take some consecutive bits from the input string as the corresponding value. The information of super hosts was obtained by using the overlapping of hash bit strings of the VBF. 
VBFA viewed each element of VBF as a bit vector \cite{DC:aLinearTimeProbabilisticCountingDatabaseApp} and estimated host's cardinality from these bit vectors. VBFA was faster than DCDS when scanning packets, but it would consume much more time when recovering super hosts if there were many super hosts hiding in the traffic because it generating more candidate IPs than DCDS.

Both DCDS and VBF tried to speed up packets processing by using SRAM instead of DRAM. Although SRAM is faster, their size is small. What's more, the processing ability of a central processing unit (CPU) is limited.

General-purpose computing on GPU(GPGPU) is the use of GPU, which typically handles computation only for computer graphics, to perform computation in applications handled by CPU traditionally\cite{GPU:ASurveyofGeneral-PurposeComputationOnGraphicsHardware}. Seon-Ho Shin et al \cite{HSD:GPU:2014:AGrandSpreadEstimatorUsingGPU} proposed a GPU based super hosts detection algorithm called GSE. GSE firstly used GPU to estimate host's cardinality with a novel data structure called compact spread estimator (CSE). CSE consisted of two components: one for storing flows, and the other for cardinality estimating. For every packet, CSE checked it in a Collision-tolerant hash table (CTH). If flow corresponding with this packet had not appeared before, CSE would set a bit in CSE. But CSE could only record the cardinality information and it couldn't recover super hosts. In another word, CSE needs to maintain an IP list to record the appearing IP in a time window. In distributed case, the transmission of this IP list between different nodes would reduce the global efficiency.

DCDS, VBFA and GSE can be modified to run in the distributed situation. But DCDS and VBFA have a lower processing speed, GSE consumes too much memory.

Unlike the previous algorithm, our novel algorithm can recover super hosts from a smart structure with a high-speed and occupies a small memory. Next section we will introduce our algorithm in detail.

\section{Distributed super hosts detection}
Suppose there are two networks, $NI$ and $NO$, where $NI$ is the inner network that we want to monitor and $NO$ is the outer network which contacts with $NI$ through a set of edge routers $ER$(for instance, $NO$ is the Internet and $NI$ may be an autonomous system managed by an ISP and $ER$ is a border router running BGP \cite{BGP:ASurveyBGPSecurityIssuesSolutions}\cite{BGP:SecureBorderGatewayProtocol}). All of these packets between $NI$ and $NO$ will be relayed by $ERs$. In the view of $ER$, every packet includes an IP address pair $<innerip,outerip>$, where $innerip$ represents an IP address in $NI$ and $outerip$ is an IP address belong to $NO$. Let $PKT_i$ represent the set of packets passing from the $i$th $ER_i$ in a time window and denote $|PKT_i|$ as the number of packets during this time window. Fig.\ref{Super_hosts_detection_model} illustrates how to detect super hosts of $NI$. 

One of the simplest ways is sending IP pairs of packets at different routers to a global server and mining super hosts at it. But these edge routers may distribute at different places and it's too expensive to transmit gigabits of IP pairs every second. A reasonable method is to scan packets on local servers near edge routers and only send small necessary information to global memory. 

\begin{figure}[!ht]
\centering
\includegraphics[width=0.47\textwidth]{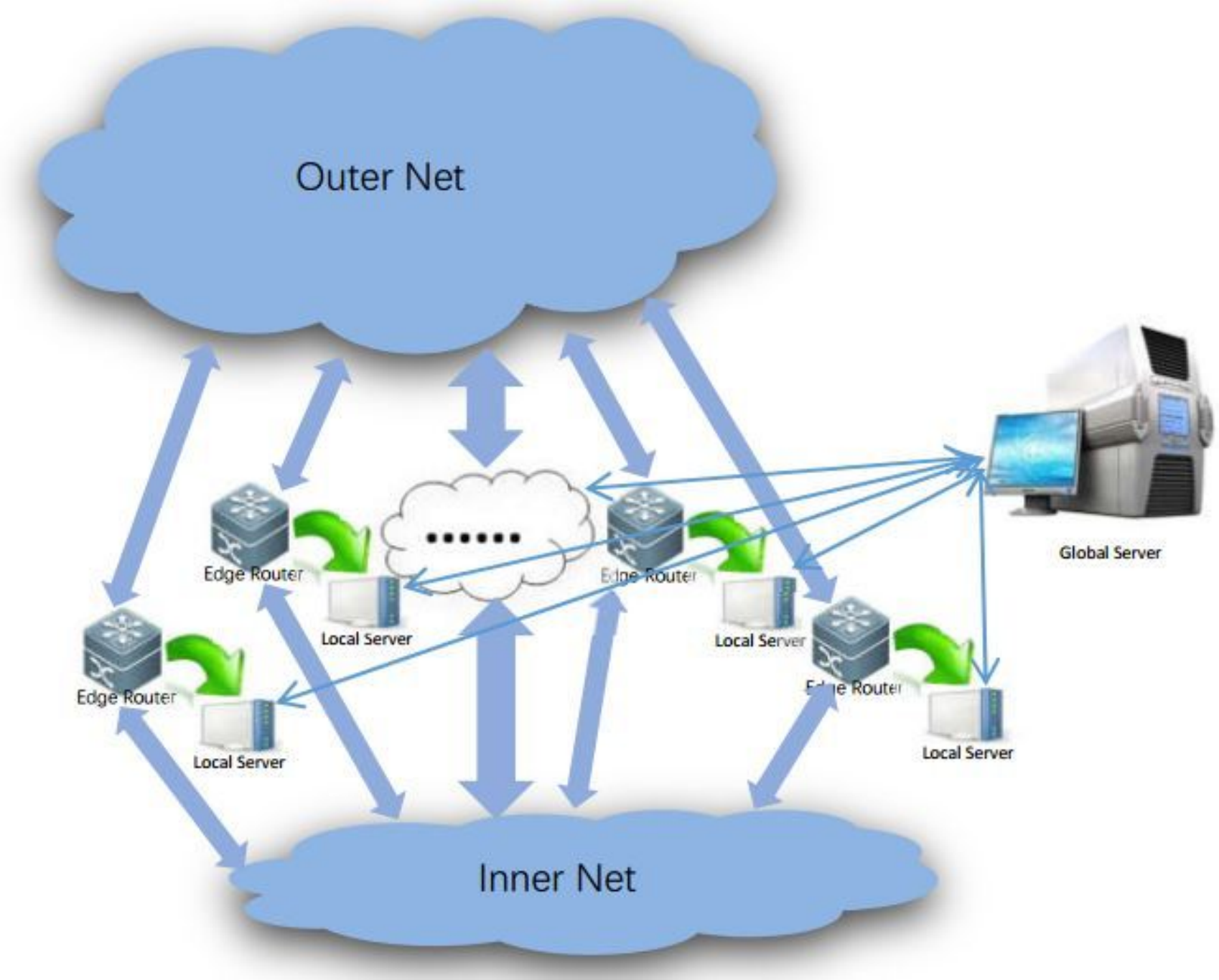}
\caption{Super hosts detection and monitor}
\label{Super_hosts_detection_model}
\end{figure}

We denote $IIP$(Inner IPs) as the set of IP addresses belong to $NI$ appearing in a time window and $OIP$(Outer IPs) as the set of $NO$'s IPs addresses contacted with hosts in $IIP$ during this time window. For $iip \in IIP$, let $OP(iip)$ represent the opposite IPs (belong to $OIP$ and communicate with $iip$ in this time window) set of $iip$. The number of opposite IPs $\left|OP(iip)\right|$ is the cardinality of $iip$. All these packets with the same inner and outer IP addresses generate a flow. Let $FLW$ represent the set of flows. The number of flow $|FLW|$ in a time window is the sum of each inner IP's cardinality and $|FLW|=\sum_{iip \in IIP}|OP(iip)|$.

 Super host's definition used in this paper is shown below.
\begin{definition}[Super Host]
\label{def-hothosts}
Given a threshold $\theta$, a super host with IP address $iip$ in monitored network $NI$ is the host that has no less $\theta$ IP addresses in outer network $NO$ sending packets to or receiving packets from $iip$ during a certain time (called time window). Briefly, if $|OP(iip)|\geq \theta$, iip is a super host.
\end{definition}

Our distributed super hosts detection algorithm contains two parts: scanning packets on a local server and recovering super hosts on the global server.

Local server of an edge router $ER$ will record these hosts' cardinalities passed through $ER$. How to record cardinalities efficiently is a key step. A smart data structure proposed in this paper is designed to solve this problem.

\subsection{Hosts cardinalities estimating}
The precision of super host detection depends on hosts' cardinality estimator. 
For an IP address $iip \in IIP$, let $Pkt(iip)$ represent the set of packets in certain time window whose source or destination address is $iip$. When a packet in $Pkt(iip)$ is sent from $iip$ to IP in outer network, $iip$ will be the source address; when the packet is sent to $iip$ from outer network, $iip$ will be the destination address in this packet. Ignoring the direction of these packets, we can extract their IP pairs $IPpair(iip)$. $IPpair(iip)$ has the same number of elements as $Pkt(iip)$ does. Every element in $IPpair(iip)$ is an IP pair like $<iip,oip>$ where $oip \in OIP$. Suppose there are $k$ packets in $Pkt(iip)$. Go a step further, we can reduce $IPpair(iip)$ to a set of outer network's IPs $OutIP(iip)=\{oip_1,oip_2,\cdots ,oip_k\}$ where $oip_i \in OP(iip)$.
Because $iip$ could send several packets to another host in outer network, so an outer IP $oip$ may appear several times in $OutIP(iip)$ and $k \geq |OP(iip)|$. The task of calculating $iip$'s cardinality is to get the number of distinct elements in $OutIP(iip)$($|OP(iip)|$) by scanning $OutIP(iip)$ once.

An exact way to get the cardinality of host $iip$ is to store every distinct element of $OutIP(iip)$ in memory with a data structure such as list, hash table or red-black tree and so on. And at last, calculate the number of elements in memory which is $iip$'s cardinality. 

For example, if we use list to store distinct IPs in $OutIP(iip)$, $4*|OP(iip)|$ bytes of memory is required. When scanning $oip \in OutIP(iip)$, we should compare it with each element in list one by one. If $oip$ already appears in this list we will scan another IP. Otherwise, add $oip$ to this list. The time complexity of scanning an IP is $O(|OP(iip)|)$. So the total time complexity of scanning $OutIP(iip)$ is $O(k*|OP(iip)|)$. Although this method can get the precise answer, it requires too much processing time and memory. What's more, this method could not be implemented parallel. 

Several rough estimation methods \cite{DC:AnOptimalAlgorithmDistinctElementProblem} \cite{DC:aLinearTimeProbabilisticCountingDatabaseApp} were proposed to get an estimation of host cardinality with a little deviation but occupied a fixed size of memory which was much smaller than the exact one needed.  OPT\cite{DC:AnOptimalAlgorithmDistinctElementProblem} is a memory efficient algorithm to estimate host's cardinality, but its process is a little complex. Bits vector estimator(BVE)\cite{DC:aLinearTimeProbabilisticCountingDatabaseApp} only update a bit when scanning an element. We choose BVE as our basic cardinality estimator method.

Before scanning $OutIP(iip)$, we initialize $g$ bits to zero. These $g$ bits turn up to a bits vector. Every IP in $OutIP(iip)$ will be mapped to a bit in this bits vector by a hash function and set the bit to 1. After scanning all these IPs, we can get $iip$'s cardinality estimation by the following equation \cite{DC:aLinearTimeProbabilisticCountingDatabaseApp}: 
\begin{equation}\label{BEV_est_value}
\widehat{|OP(iip)|}=-g*ln(\frac{z}{g})
\end{equation} 
Where $|OP(iip)|'$ is $iip$'s cardinality estimating value, $z$ is the number of zero bits in bits vector.

In practice, the bigger $g$ is, the more accurate $|OP(iip)|'$ will be. But in the network, most hosts' cardinalities are very small. It's unwise to allocate $g$ bits for every inner IPs. Consequently, we put several bits vector together to record several inner IPs' cardinalities and we call these several bits vector as bits array (BA). A bits vector is a column in BA. An inner IP $iip$ will be projected to a bits vector in BA by a hash function and this bits vector will be used to record $iip$'s cardinality. A single BA could be used to record several inner IPs' cardinality, but it has the following two weakness:
\begin{enumerate}
\item A single BA can not recover super hosts directly. $iip$ will be projected to a column in BA by a one-way hash function. For a column in BA, we couldn't determine which $iip$ is projected to it.

\item The estimating value would be higher than the real cardinality. Because every column of a BA could be used to record several inner IPs' cardinalities at the same time window. Some bits of these columns could be set by different inner IPs. If didn't remove this influence, we will get error result.
\end{enumerate}

In order to solve these problems, we use several bits arrays together and modify the estimating equation by reducing some zero bits number.

We separate bits arrays into two categories: restoring arrays(RAs) and validating arrays(VAs). RAs are used to restore super hosts and reducing the influence of bits sharing. VAs only have the function of reducing bits sharing influence. The set of RAs and VAs is called cardinality sketch(CS) as shown in fig.\ref{BitsArrayCube}.

\begin{figure}[!ht]
\centering 
\includegraphics[width=0.47\textwidth]{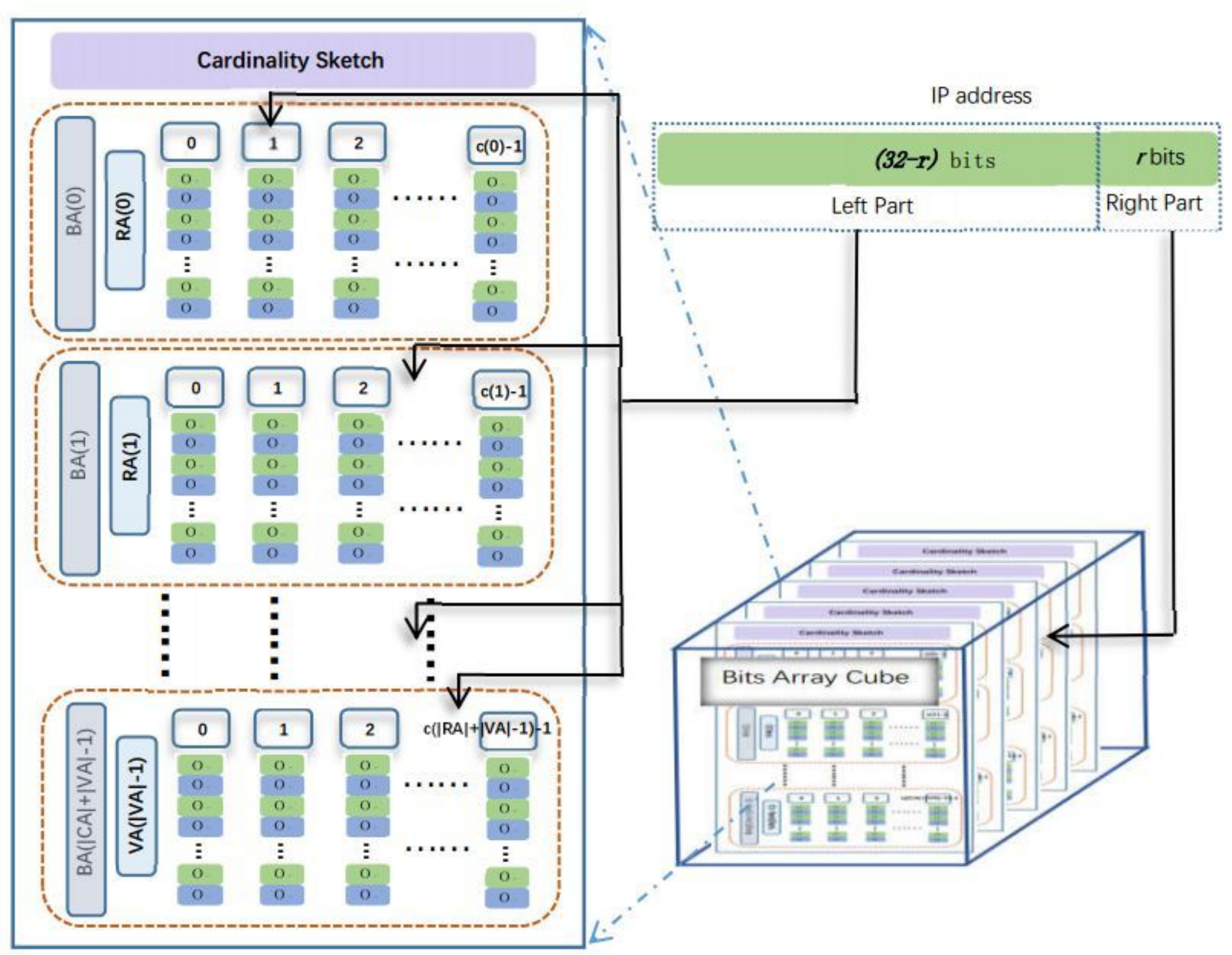}
\caption{Cube of bits array}
\label{BitsArrayCube}
\end{figure}

The inner network may be a big network, such as country network, city network. If we record all these inner IPs cardinalities in a $CS$, the estimating result may deviate the real cardinalities seriously because of the ultra sharing of bits\cite{HSD:identifyHighCardinalityHosts}\cite{HSD:bitmapCountingActiveFlowsHighSpeedLinks}. 

We used several $CS$s to record different inner IPs' cardinalities. The set of these $CS$s is called Cube of Bits Array($CBA$). Using which $CS$ to record an inner IP's cardinality is according the right $r$ bits of this inner IP.
In order to increase the randomness of IP address, each IP will be hashed by a mangling operation\cite{RS2004:ReversibleSketchesEfficientAccurateChangeDetectionNetworkDataStreams}. When restoring super host at the end of a time window, we use the mangling operation again to acquire the origin IP. In the following part, IP means the mangled IP address.
There are $2^r$ $CS$s in $CBA$. Every inner IP could be classified into two parts: Right Part($RP$: the right $r$ bits) and Left Part($LP$: the left $32-r$ bits). $RP$ is used to select $CS$ in $CBA$ and $LP$ is used to determine column index of different $BA$ in the $CS$. An inner IP will select $|RA|+|VA|$ columns from every bit array of a $CS$ to record its cardinality at the same time. $RA$s' columns will be determined by sub bits of $RP$ and $VA$s' columns will be acquired by random hash functions. At the end of a time window, we merge these columns together by ``bits and" operation. The union column will remove some bits set by other inner IPs, but it still contains a little noise. We will remove the resting noise bits by estimating their number.
\begin{definition}[Union Column, UC]
    \label{def-lsb}
   For an inner host $iip$, its union column is the bitwise ``and" result of every column in $CS$ related to it, written as $UC(iip)$.
  \end{definition}

\begin{theorem}
 \label{Th-shareBitsSetProb}
For a $CS$, if there are $\eta$ flows projected to it, the probability that a bit in the union column is set is         $\varepsilon = \prod_{i=0}^{|RA|+|VA|-1}(1-e^{-\frac{\eta}{c(i)*g}})$, where $c(i)$ is the column number of the $i$th bits array.
 \end{theorem}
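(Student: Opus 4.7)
The plan is to fix an arbitrary bit position $b$ in the union column $UC(iip)$ and compute the probability $\Pr[b=1]$ by decomposing the event into the conjunction of ``$b$ is set in the $i$-th array's selected column'' for each $i=0,\dots,|RA|+|VA|-1$, then multiplying the per-array probabilities under the (standard) assumption that the hash functions for different bits arrays are mutually independent. Since $UC(iip)$ is the bitwise AND of these $|RA|+|VA|$ columns, $b$ is $1$ in $UC(iip)$ iff it is $1$ in each column, so this factorization is the natural route.

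Next, I would carry out the single-array calculation. Fix array $i$ and consider one of the $\eta$ flows $\langle iip', oip'\rangle$ projected into the $CS$. For this flow to set our chosen bit $b$ in array $i$'s selected column, two independent hash events must occur: the column hash for $iip'$ must land on the same column as $iip$ (probability $1/c(i)$, since there are $c(i)$ columns), and the in-column hash for $oip'$ (driven by $g$ bits of width) must select bit $b$ (probability $1/g$). Hence a single flow sets $b$ in array $i$ with probability $1/(c(i)\,g)$. Treating the $\eta$ flows as independent, the probability that $b$ remains $0$ after all $\eta$ flows is
\begin{equation*}
\Bigl(1-\tfrac{1}{c(i)\,g}\Bigr)^{\eta} \;\approx\; e^{-\eta/(c(i)\,g)},
\end{equation*}
using the standard approximation $(1-x)^\eta\approx e^{-\eta x}$ valid when $1/(c(i)g)$ is small and $\eta$ is large (equivalently, viewing the ``hits on bit $b$ in array $i$'' as a Poisson process with rate $\eta/(c(i)g)$, which is the same asymptotic regime invoked by the underlying BVE estimator of equation~(\ref{BEV_est_value})). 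Complementing gives $\Pr[b=1 \text{ in array }i] = 1-e^{-\eta/(c(i)\,g)}$.

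Finally, combining across arrays via independence of the hash families yields
\begin{equation*}
\varepsilon \;=\; \prod_{i=0}^{|RA|+|VA|-1}\Bigl(1-e^{-\eta/(c(i)\,g)}\Bigr),
\end{equation*}
which is exactly the stated expression. The main obstacle I anticipate is justifying the two independence assumptions rigorously: (i) independence of the per-array hashes, which is a modelling choice standard in sketch-based cardinality estimation, and (ii) independence of the two sub-hashes (column index versus in-column bit) for a single flow, which requires that the $32-r$ ``left'' bits driving column selection and the $g$-way bit-hash on $oip$ be treated as independent uniform mappings. A secondary concern is the geometric-to-exponential approximation, which is the same first-order approximation Whang's linear counting argument relies on, so it is reasonable to invoke it without re-derivation here.
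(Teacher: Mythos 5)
Your proposal is correct and takes essentially the same route as the paper's proof: both factor the event that a union-column bit is set into a product over the $|RA|+|VA|$ arrays (justified by the bitwise AND plus independence of the hash functions) with per-array probability $1-e^{-\frac{\eta}{c(i)g}}$. The only cosmetic difference is that you derive that factor from first principles via $\bigl(1-\frac{1}{c(i)g}\bigr)^{\eta}\approx e^{-\frac{\eta}{c(i)g}}$, whereas the paper gets the same quantity by assigning $\frac{\eta}{c(i)}$ flows to a column and invoking equation~\eqref{BEV_est_value}; the two derivations rest on the identical asymptotic approximation.
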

\begin{proof}
If these $\eta$ flows are project to $BA(i)$ randomly, there will be $\frac{\eta}{c(i)}$ flows in a column. According to equation \ref{BEV_est_value}, the `1' bits of this column is $g-g*e^{-\frac{\eta}{c(i)*g}}$. A bit is set to `1' with probability $P_1(i)=1-e^{-\frac{\eta}{c(i)*g}}$. A bit in the union column is set to 1 with probability $\varepsilon = \prod_{i=0}^{|RA|+|VA|-1}{P_1(i)}=\prod_{i=0}^{|RA|+|VA|-1}(1-e^{\frac{\eta}{c(i)*g}})$. 
\end{proof}

\begin{theorem}
 \label{Th-GetEstValueFromShareCS}
 For an inner host $iip$, its cardinality could be estimated from the zero bits number $Z$ in $UC(iip)$ by equation
 $\widehat{|OP(iip)|}=-g*ln(\frac{Z }{g-g*\varepsilon})$. 
  \end{theorem}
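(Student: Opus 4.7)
The plan is to derive the expected number of zero bits in $UC(iip)$ by factoring it into two independent contributions, one from $iip$'s own flows (signal) and one from flows of other inner IPs that happen to share columns with $iip$ (noise), and then to invert the relation to solve for $|OP(iip)|$.

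First I would fix an arbitrary bit position $b \in \{0,1,\ldots,g-1\}$ in $UC(iip)$ and analyze when it stays at $0$. Because $UC(iip)$ is the bitwise AND of $|RA|+|VA|$ columns, bit $b$ of $UC(iip)$ equals $0$ iff bit $b$ equals $0$ in at least one of $iip$'s columns. Decomposing the sources that could set a bit to $1$, the event ``$b$ remains $0$ in $UC(iip)$" is the intersection of two sub-events: (i) none of $iip$'s own $|OP(iip)|$ outer IPs hashes to bit $b$, and (ii) the background flows of other inner IPs fail to cover bit $b$ simultaneously in every one of $iip$'s columns.

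Next I would evaluate each sub-event separately. For (i), the hash that places $iip$'s own flows is consistent across the bit arrays in the $CS$, so $iip$'s flows behave as a single BVE column loaded with $|OP(iip)|$ elements; by Equation \ref{BEV_est_value}, the probability that bit $b$ is missed by all of them is $e^{-|OP(iip)|/g}$. For (ii), Theorem \ref{Th-shareBitsSetProb} directly provides the probability that bit $b$ of $UC(iip)$ is filled by noise alone across every bit array, namely $\varepsilon$; hence the probability of the complement is $1-\varepsilon$. Invoking independence between the signal hash and the column-sharing hashes, I would multiply the two probabilities to obtain $\Pr[\,b\text{ is }0\text{ in }UC(iip)\,] = (1-\varepsilon)\,e^{-|OP(iip)|/g}$.

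I would finish by passing from a single bit to the whole union column through linearity of expectation, giving $E[Z] = g(1-\varepsilon)\,e^{-|OP(iip)|/g}$, and then by replacing the expectation with the observed count $Z$ and solving for $|OP(iip)|$:
\[
e^{-|OP(iip)|/g} \;=\; \frac{Z}{g-g\varepsilon},
\qquad
\widehat{|OP(iip)|} \;=\; -g\ln\!\left(\frac{Z}{g - g\varepsilon}\right).
\]
The main obstacle is justifying the independence in step (ii): it is intuitive because the positions hashed by $iip$'s own outer IPs are determined by a different hash than the one choosing which other inner IPs share $iip$'s columns, but the argument must rule out second-order correlations introduced by the shared bit-array geometry. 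Everything else is a routine application of the BVE estimator and the previously proved Theorem \ref{Th-shareBitsSetProb}.
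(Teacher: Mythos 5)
Your proposal is correct and is essentially the paper's own argument: the paper likewise factors the zero-bit count into the ``signal-only'' count $Z_r = g\,e^{-|OP(iip)|/g}$ and a noise survival factor, writing $Z = Z_r(1-\varepsilon)$ (equivalently $Z_r = Z/(1-\varepsilon)$) via Theorem \ref{Th-shareBitsSetProb}, and then substitutes back into equation \ref{BEV_est_value} to obtain $\widehat{|OP(iip)|}=-g\ln\bigl(\frac{Z}{g-g\varepsilon}\bigr)$. Your per-bit formulation with independence and linearity of expectation is just a more explicit rendering of the paper's expected-count correction, and it makes visible the independence assumption the paper leaves implicit.
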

\begin{proof}
 Suppose that there is no other hosts set the bit in $UC(iip)$ except $iip$. According to equation \ref{BEV_est_value}, $\widehat{|OP(iip)|}=-g*ln(\frac{Z_r}{g})$. $Z_r$ is the real zero bits number when the bits in $UC(iip)$ only setted by $iip$.
 
 Then take account on the probability that the $Z_r$ bits are set by other hosts. According to theorem \ref{Th-shareBitsSetProb}, the expecting number of `1' bits is $g-Z_r+Z_r*\varepsilon$. We should remove these `1' bits from $UC(iip)$. In another word, we should add $Z_r*\varepsilon$ zero bits to $UC(iip)$ and $Z_r=Z+Z_r*\varepsilon$. Transforming the above formula, we will get $Z_r=\frac{Z}{1-\varepsilon}$.
 
Put the modified $Z_r$ to equation \ref{BEV_est_value}, we will get $\widehat{|OP(iip)|}=-g*ln(\frac{Z }{g-g*\varepsilon})$.
\end{proof}

Estimate $|OP(iip)|$ by $|UC(iip)|$ maybe get an over-estimating value because of the sharing column with other hosts. So we should add a modified value to $Z$. In the proof processing, $Z_r$ could be rewritten as $Z+\frac{Z*\varepsilon}{1-\varepsilon}$. The additional value $\frac{Z*\varepsilon}{1-\varepsilon}$ could be regarded as the revising value.
 
\subsection{Updating bit array cube}
In a high-speed network, it is burdensome to store packets in memory even only storing IP header. So we can only scan these packets once. When acquiring a packet, we will first extract IP pair like $<iip,oip>$, and update $CBAA$ to record $iip$'s cardinality.

After selecting a $CS$ in $CBAA$ by rightest $r$ bits of $iip$, we should calculate column index of each bit array by left part $LP$ of $iip$. When determining $RA$s' columns indexes, we should make sure that $iip$'s $LP$ can be restored from these indexes. We select some successive bits from $LP$ to generate a column index. Let $CL(i)$ represent the $i$th bit array's column index of $iip$ and $c(i)$ be the total columns number of $i$th bit array. $CL(i)$ can be written in binary format. Supposing $2^{cbn(i)}=c(i)$, $cbn(i)$ is the bits number of $CL(i)$. $cbn(i)$ is also the number of successive bits we want to extract from $LP$. Together with the start bits $CL_{bs}(i)$, we can map $iip$ to a unique column in $i$th bit array. $CL_{bs}(i)$ is the start bits offset and  $r \leq CL_{bs}(i)\leq 31$. Fig.\ref{SubIPtoColIdx} shows how to determine each restoring arrays' columns by $iip$'s left part.

\begin{figure}[!ht]
\centering 
\includegraphics[width=0.47\textwidth]{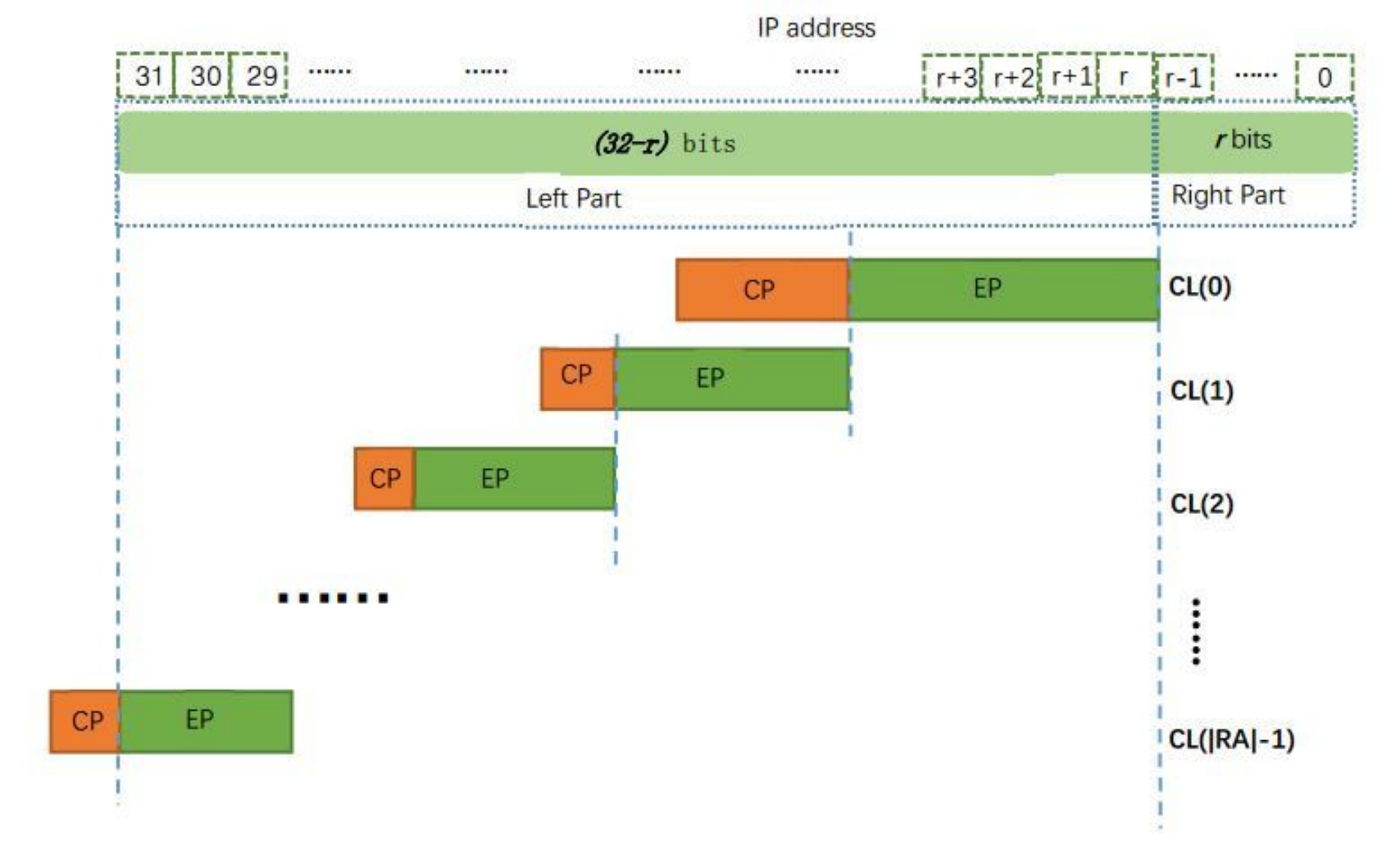}
\caption{Select columns of recover arrays}
\label{SubIPtoColIdx}
\end{figure}

We divide $CL(i)$ into two parts logically: Efficient Part (EP) and Checking Part (CP). Let $EP(i)$ and $CP(i)$ represent efficient part and checking part of the $i$th array's column. Denote $EP(i)$ and $|CP(i)|$ as the length of $EP(i)$ and $CP(i)$. Each column's EP is pairwise disjoint and their union is LP. CP is used to filter fake columns tuple when restoring super hosts. $CP(i)$ equals to the first $|CP(i)|$ bits of $EP(i+1)$, where $0\leq i \leq |RA|-1$ and when $i= |RA|-1$, $i+1=0$.  A longer CP is beneficial to super hosts restoring but occupy more memory too. CP of the last RA's column relates to bits with index more than 31. In the case, we module this bits' indexes with base 32.

Indexes of VAs' columns are selected by random hash functions $H_i$, where $H_i$ is a hash function that maps a value between $[0,2^{32}-1]$ to a value between $[0,c(i)-1]$.
 
Algorithm \ref{updateIPpair} describes IP pair scanning process in each local server.          
\begin{algorithm}                       
\caption{Update IP addresses pair}          
\label{updateIPpair}                            
\begin{algorithmic}                    
    \Require {\\ IP address pair $<iip,oip>$,\\
     CBAA}\\
     
\State $bvIdx \Leftarrow H_{bv}(oip)$
\State $csIdx \Leftarrow $ right $r$ bits of $iip$
\State $pCS \rightarrow $ $csIdx$th CS in CBAA
\State $LP \Leftarrow$ left (32-r) bits of $iip$
\For $RA(i)$ in $pCS$
\State $CL(i) \Leftarrow $ extract $cbn(i)$ bits from $CL_{bs}(i)$ in $iip$
\State set $bvIdx$th bit in column $CL(i)$ to 1
\EndFor
\For $VA(j-|RA|)$ in $pCS$, $|RA|\leq j \leq |RA|+|VA|-1$
\State $CL(j) \Leftarrow H_j(LP)$
\State set $bvIdx$th bits in column $CL(j)$ to 1
\EndFor

\end{algorithmic}
\end{algorithm}
Each local server will run algorithm \ref{updateIPpair} to scan packets. When an IP pair updates the $CBAA$, only $|RA|+|VA|$ bits will be changed to 1 without reading any bytes from $CBAA$. There is no reading related conflicts during the updating process and algorithm \ref{updateIPpair} matches the Bernstein's conditions \cite{PL:BerensteinAnalysisProgramsParallelProcessing} \cite{PL:SharedMemorySynchronization} which ensures that the updating process can run parallel.  After scanning all the packets in a time window, a recovering algorithm will be applied to find out super hosts from these local servers.

\subsection{Restore super hosts on global server}
At the end of a time window, each local server will send its $CBA$ to the global server. $CBA$ on each local server has the same structure: the same number of $CS$, the same columns number in each bit array. Global server merges these $CBA$ by ``bits OR" operation. The merged $CBA$ contains cardinalities information of all hosts in the inner network. We can restore super hosts from the global $CBA$. 

$RP$ of an IP address could be acquired by $CS$'s id in $CBA$. The task of super hosts restoring is to get $LP$ from a $CS$. In each bit array, there is a kind of columns called ``hot columns" as defined below.

\begin{definition}[Hot Column, HC]
    \label{def-hotColumn}
For a column in a bit array, if there is one or more super hosts projecting to it, this column is called hot column.
  \end{definition}

IP's left part hides in $RA$'s columns. As shown in fig.\ref{SubIPtoColIdx}, if we get $iip$'s columns indexes of $RA$s, we can concatenate their $EP$s together. But we don't know $iip$'s columns indexes when restoring super hosts. Fortunately, we can know which columns are hot columns by calculating its zero bits number. By testing tuple of these kinds of columns of each $RA$, we can recover $iip$'s $LP$ indirectly. Generally speaking, there are two steps in restoring $LP$: hot columns calculation and columns tuple checking.

According to theorem \ref{Th-GetEstValueFromShareCS}, if a host's cardinality is no less than $\theta$, there will be less than $\theta_{bn}=g(1+\varepsilon)*e^{-\frac{\theta}{g}}-g*\varepsilon$ zero bits in these columns relating to it. Scan each column of $RA$ and record these columns indexes whose zero bits number is no more than $\theta_{bn}$ to a list $HC(i)$. $HC(i)$ stores $i$th $RA$'s hot columns indexes. Algorithm \ref{alg-getHotColumn} describes how to mining hot columns from $CS$.

\begin{algorithm}                       
\caption{Locating hot columns}          
\label{alg-getHotColumn}                            
\begin{algorithmic}             
\Require {\\ $\theta_{bn}$,\\ cardinality sketch $CS$;} 
\Ensure {\\Hot columns lists set\\ $HC=\{HC(0),HC(1),\cdots,HC(|RA|-1)\}$;\\}

\For{ each restoring array $RA(i)\in CS$ }
   \For{ $j \in [0,c(i)]$ }
   \If {there are no more than $\theta_{bn}$ zero bits in $j$th column of $RA(i)$}
    \State Insert $j$ into $HC(i)$
   \EndIf
   \EndFor
   \State Insert $HC(i)$ into $HC$
\EndFor
\State Return $HC$
\end{algorithmic}
\end{algorithm}

super hosts' $LP$s are hiding in these hot columns' efficient parts.
Selecting $|RA|$ hot columns from every $HC(i)$, we will get a hot columns tuple $THC=<hc_0,hc_1,\cdots,hc_{|RA|-1}>$ where $hc_i\in HC(i)$. There are total $|THC|=$ $\prod_{i=0}^{|RA|-1}{|HC(i)|}$ different such tuples. Because we don't know which tuple contains super hosts, we had to test them one by one. Hot columns tuple checking contains two parts: checking if this tuple can restoring $LP$ and testing if the union of columns related to the restoring $LP$ is still a hot column.

According to the relationship between IP's different columns, we can filter most fake tuple by comparing each column index's last $|CP(i)|$ bits with next column index's first $|CP(i)|$ bits. Now, we only know the offset and length of each column index. The next question is how to get the length of different columns $CP$. $EP$ and $CP$ of different columns are divided logically. For the $i$th restoring bit array, the $EP$ of its column's index is these bits that appear in this bit array's column index, not in $(i+1)$th bit array's. So we can get the length of $EP(i)$ by calculating the difference of offset between contiguous bit arrays' column indexes. In another word, $|EP(i)|=CL_{bs}((i+1)mod(|RA|))-CL_{bs}(i)$, where $0\leq i \leq |RA|-1$. Because $CP$ is the resting part of $LP$, $|CP(i)|=cbn(i)-|EP(i)|$. If all columns in a tuple are corresponding to this condition, we can restore $LP$ from each column's $EP$. Algorithm \ref{alg-checkColTuple} illustrates how to check each columns tuple.

\begin{algorithm}                       
\caption{Checking hot columns tuple}          
\label{alg-checkColTuple}                            
\begin{algorithmic}             
\Require {\\ Hot columns tuple$THC=<hc_0,hc_1,\cdots,hc_{|RA|-1}>$ ,\\ $\theta_{bn}$ ,\\ cardinality sketch $CS$;} 
\Ensure {\\ super host's $LP$;\\}

\State $LP \Leftarrow -1$
\For{ $i \in [0,|RA|-1]$ }
    \State $|CP(i)| \Leftarrow CL(i)-|EP(i)|$
   \If {$CP$ of $hc_i$ not equal to the first $|CP(i)|$ bits of $hc_{(i+1)mod(|RA|)}$}
    \State Return $LP$
   \EndIf
\EndFor
\State $LP \Leftarrow $ concatenation of all columns' $EP$
\State $UCol \Leftarrow $ $hc_0$th column of $RA(0)$
\For {$ i \in [1,|RA|-1]$}
\State $UCol$ ``AND" with $hc_i$th column of $RA(i)$
\EndFor
\For {$ j \in [0,|VA|-1]$}
  \State $UCol$ ``AND" with $H_j(LP)$th column of $VA(j)$
\EndFor
\If {zero bits number of $UCol$ bigger than $\theta_{bn}$}
\State $LP \Leftarrow -1$
\EndIf
\State Return $LP$
\end{algorithmic}
\end{algorithm}

When using algorithm \ref{alg-checkColTuple} to check a columns tuple, if columns can not generate a $LP$ of some IP address or no super hosts relating to this columns tuple, this algorithm will return a negative value -1, otherwise it will return the restoring $LP$. By concatenating restored $LP$ with $RP$, which is the id of $CS$, we get the whole IP address. No matter scanning packets or checking hot columns tuple, we can run them parallel in order to acquire a high processing speed. 

\section{Implement on GPU}
The use of graphics processing units (GPUs) becomes a significant advance to speed up the packets scanning by taking the advantage of the massive parallelism capabilities of GPUs. With the acceleration of GPU, thousands of packets can be disposed at the same time. 

OpenCL and CUDA are two famous GPGPU models. OpenCL, established by the Khronos Group \cite{OpenCL:group}, is a framework for writing parallel programs that execute across heterogeneous platforms consisting of CPUs, GPUs, and other processors. As a result, OpenCL provides software developers with portable and efficient access to the power of diverse processing platforms \cite{GPU:SystemCSimulationOnGP-GPUsCUDAOpenCL} \cite{OpenCL:AParallelProgrammingStandardForHeterogeneousComputingSystems}. Unlike OpenCL, CUDA \cite{CUDA:nvidia} only supports GPUs produced by NVIDIA. But it has the best performance for NVIDIA's GPUs. 

With the help of NVIDIA's Compute Unified Device Architecture (CUDA), deploying applications on GPU becomes easier than before when graphics language were required if someone wanted to run other program on GPU. In the CUDA execution model, threads running on GPU device follows the single instruction multiple data (SIMT) model of execution. Under SIMT modal, each packet could be viewed as a data to be processed with the same instructions set by a thread. In CUDA model, millions of threads could be started to deal with huge packets. 

Taking advantage of the full programmability offered by CUDA and the potential parallel ability of GSDA, we can detect super hosts in real time.
 
\subsection{Update packets on GPU}
 
Each IP pair could be regarded as a datum unit to be processed by a thread. Thousands of threads could be started on GPU at the same time to handle thousands of IP pairs parallel.  
Fig.\ref{GPU_SuperhostsDectionModel} demonstrates how to detect super hosts on the local server with GPU accelerating.

All IP pairs will be firstly buffered on local server's memory and then transmitted to GPU's global memory for cardinalities recording.
\begin{figure}[!ht]
\centering
\includegraphics[width=0.47\textwidth]{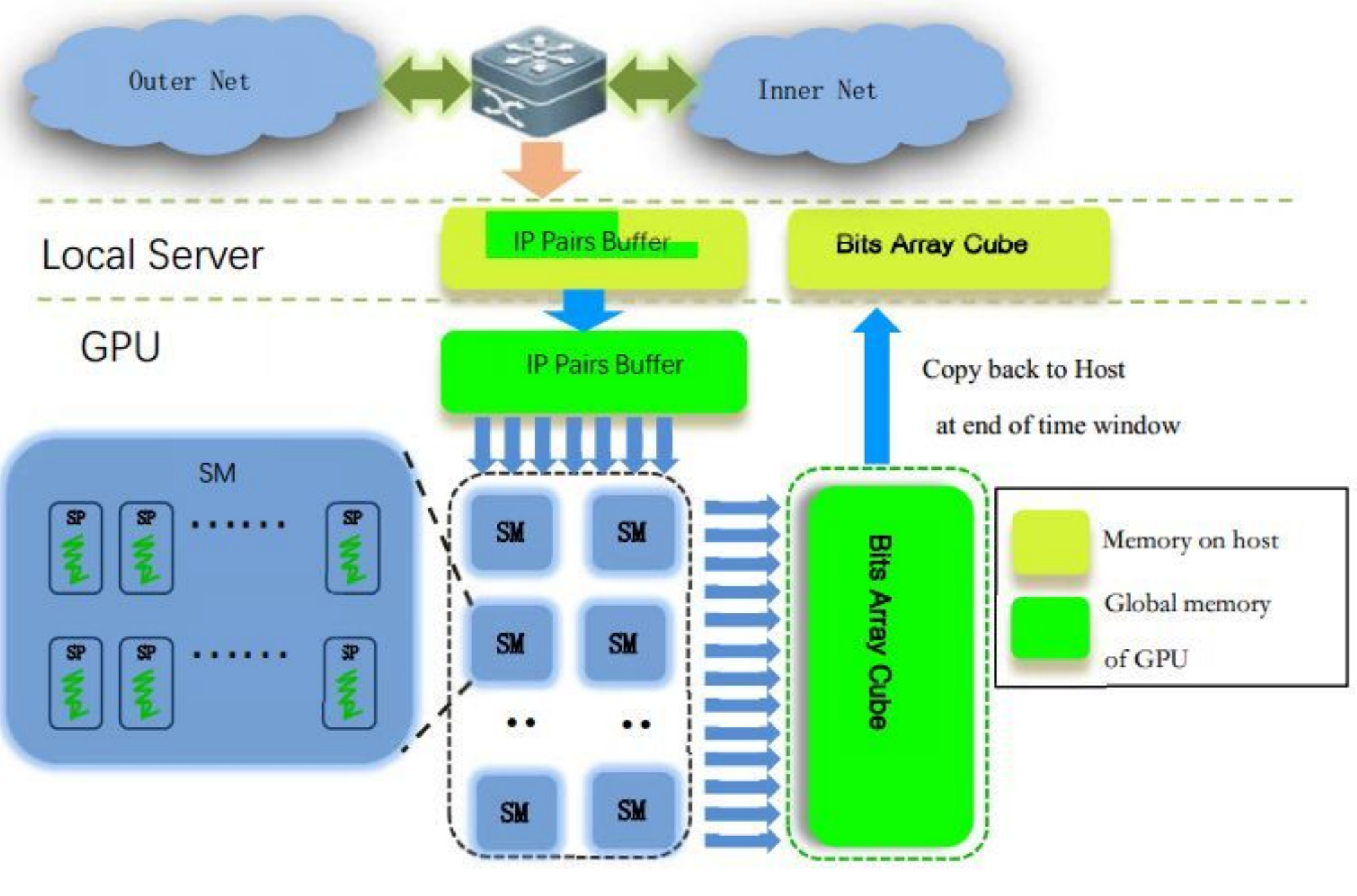}
\caption{Super hosts detection on GPU}
\label{GPU_SuperhostsDectionModel}
\end{figure}

Every thread will run the algorithm \ref{updateIPpair} to handle an IP pair. These threads will read different IP pairs parallel and update $CBA$.

At last, when all these IP pairs in a time window are scanned, $CBA$ on local server will be sent to global server and merged together for super hosts detection. 

\subsection{Restoring Super hosts on GPU}
At the restoring stage, there are three parts could be running parallel on GPU: merging local $CBA$, locating hot columns of different bit array in global $CBA$ and checking hot columns tuple.

After receiving local $CBA$s from different local servers, the global server will merge them together one by one for global super host detection. We will first initialize a $CBA_g$ which has the same structure as local server's $CBA$ has. Every bit of this initial $CBA_g$ on the global server is set to 1. Every local $CBA$ will be merged to $CBA_g$ by bitwise ``AND" operation. There is no data conflict between columns of $CBA$, we can merge different columns by several GPU threads concurrently.

Every column of different bit arrays has the same rows (bits vector contains the same number of bits). Using a GPU kernel thread to checking if a column is a hot column by counting the `0' bits in it. GPU can launch thousands of such threads for hot column checking parallel. The number of columns is fixed. Both global $CBA$ merging and hot columns checking will not consume much time on GPU. 

The complexity of columns tuple checking depends on the number of super hosts. The more super hosts, the more candidate columns tuple. In algorithm \ref{alg-checkColTuple}, we need to use a buffer with size $\frac{g}{8}$ bytes for union column. When a thread checking a columns tuple, we may allocate this buffer temporarily and free it before this thread return. This method is flexible for threads evoking. But it will waste time for GPU memory allocating and freeing. What's more, when there are too many threads, memory allocating may cause an error because of too many memory requirements. In order to speed up columns tuple checking speed and avoid additional running error, we allocate a buffer pool for all tuple checking threads on GPU. A buffer in the pool is used by only a thread once a time. With the help of buffers pool, every thread just needs to set every bit of the buffer to 1 at the begin. Although the number of buffers in the pool limits the number thread running parallel, we may allocate a buffers pool as large as possible.

$CBA$ is a smart structure. There is no data conflict when updating or reading $CBA$. When deploying our algorithm on GPU, we can detect high-speed networks' super hosts in real time.

\section{Experiment and analysis}
We use real-world backbone traffic to evaluate the performance of different super hosts detection algorithms. All of these algorithms run on a PC with Nvidia GPU(NVIDIA Titan Xp, with 12 GB global memory).  
\subsection{Experiment data}
The network traffic used in our experiment is acquired from CERNET \cite{expdata:IPtraceCernetJS}. In the experiments, we use 1 hour traffic starting from 13:00 to 14:00 On October 23, 2017. The size of time window is 5 minutes. There are 12 time windows in the traffic. We used different algorithms to mining super hosts of every time window. Table \ref{tbl-trafficInf} describes the traffic information of each time window.

\begin{table*}
\centering
\caption{Experiment Data}
\label{tbl-trafficInf}
\begin{tabular}{c}                                                                                                                                                                                                                           
\centering
\includegraphics[width=0.7\textwidth]{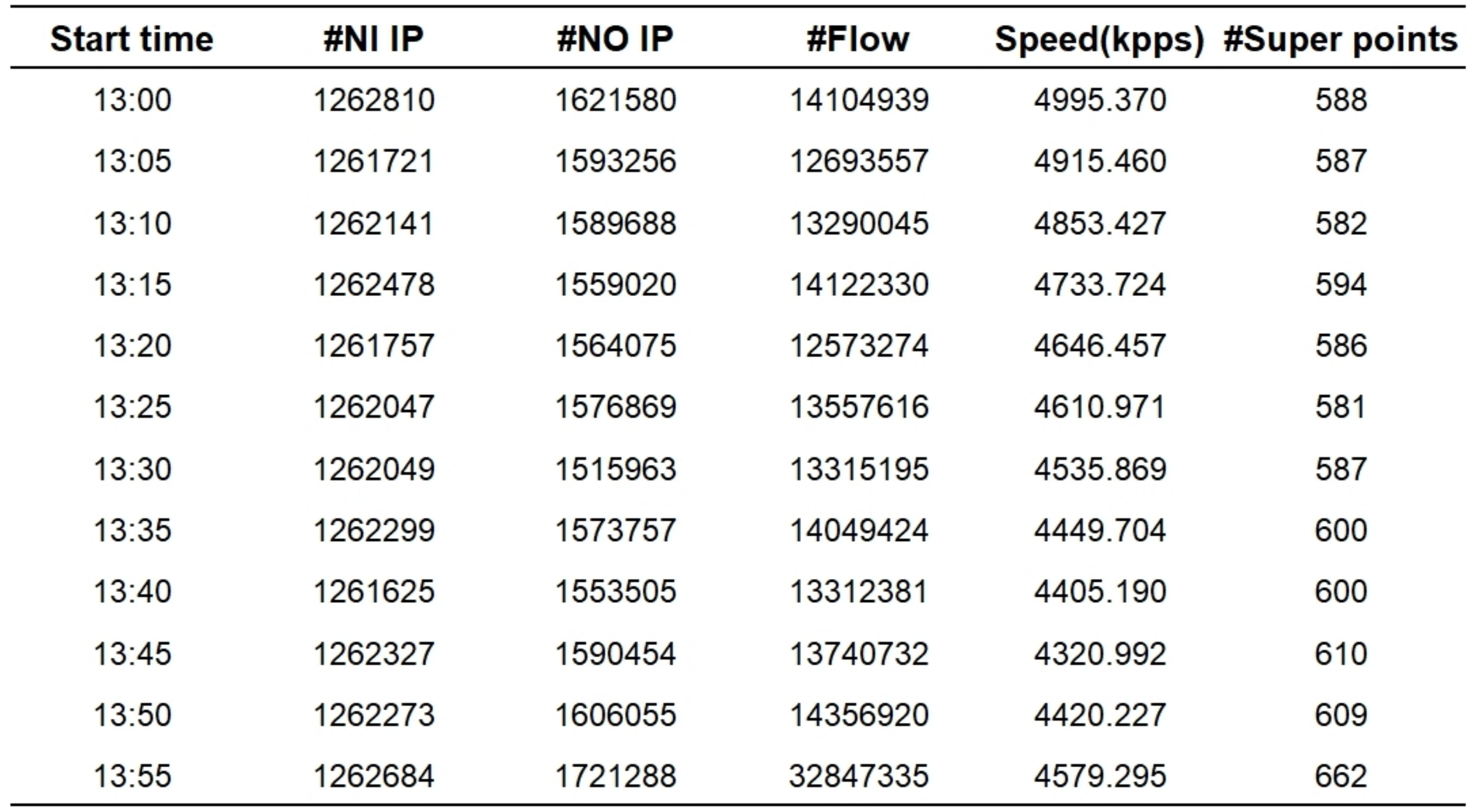}
\end{tabular}
\end{table*}

In our experiments, super host threshold $\theta$ is set to 1024. All super hosts of different traffic datasets in different time window were acquired offline precisely. In order to get these baselines, we use ``map container" of C++ STL to store every hosts' opposite points and sort every host by their cardinalities descent. Every algorithm's detection result will be checked by these baselines to evaluate accuracy.

\subsection{Algorithm accuracy}
We compared the rightness of our algorithm CBAA with DCDSA\cite{HSD:ADataStreamingMethodMonitorHostConnectionDegreeHighSpeed}, VBFA \cite{HSD:DetectionSuperpointsVectorBloomFilter} and GSE \cite{HSD:GPU:2014:AGrandSpreadEstimatorUsingGPU}. False Negative Ratio (FNR) and False Positive Ratio (FPR) defined by equation \eqref{eq-fnr} and \eqref{eq-fpr} are adopted to evaluate the detection accuracy of these algorithms. 
 \begin{equation}\label{eq-fnr}
FNR=\frac{\left \|\hat{H}^-\right \|}{\left \|H\right \|}
\end{equation} 
\begin{equation}\label{eq-fpr}
FPR=\frac{\left \|\hat{H}^+\right \|}{\left \|H\right \|}
\end{equation} 
Let $\hat{H}$ represent the set of detected super hosts by estimating algorithms. H is the set of super hosts whose cardinality is no less than the threshold. $H,\hat{H}^-,\hat{H}^+$ are defined as below.
$$H=\left \{h| cardinality\ of\ h \geq \theta \right \}$$
$$\hat{H}^-=\left\{h|h \in H,h \notin \hat{H}\right \}$$
$$\hat{H}^+=\left\{h|cardinality\ of\ h \leq {\theta },h \in \hat{H}\right \}$$
The sum of FPR and FNR measures the total false rate of an algorithm, written as FTR.

Fig.\ref{fig_exp_rlt_compare_fpr}, fig.\ref{fig_exp_rlt_compare_fnr} and fig.\ref{fig_exp_rlt_compare_ftr}show different algorithms' FPR, FNR and FTR.

\begin{figure*}[!ht]
\centering
\includegraphics[width=0.9\textwidth]{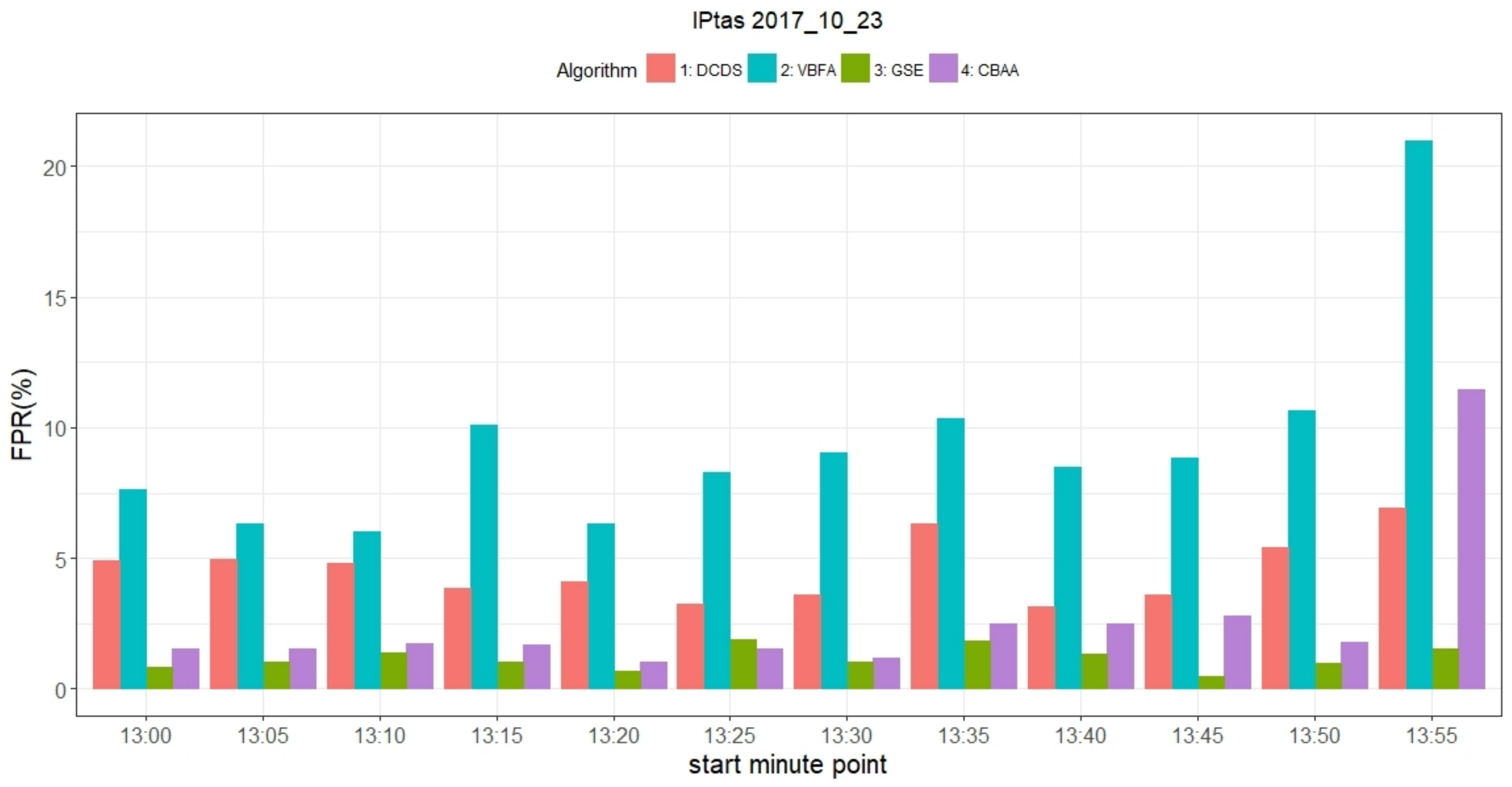}
\caption{False positive rate of different algorithms}
\label{fig_exp_rlt_compare_fpr}
\end{figure*}

\begin{figure*}[!ht]
\centering
\includegraphics[width=0.9\textwidth]{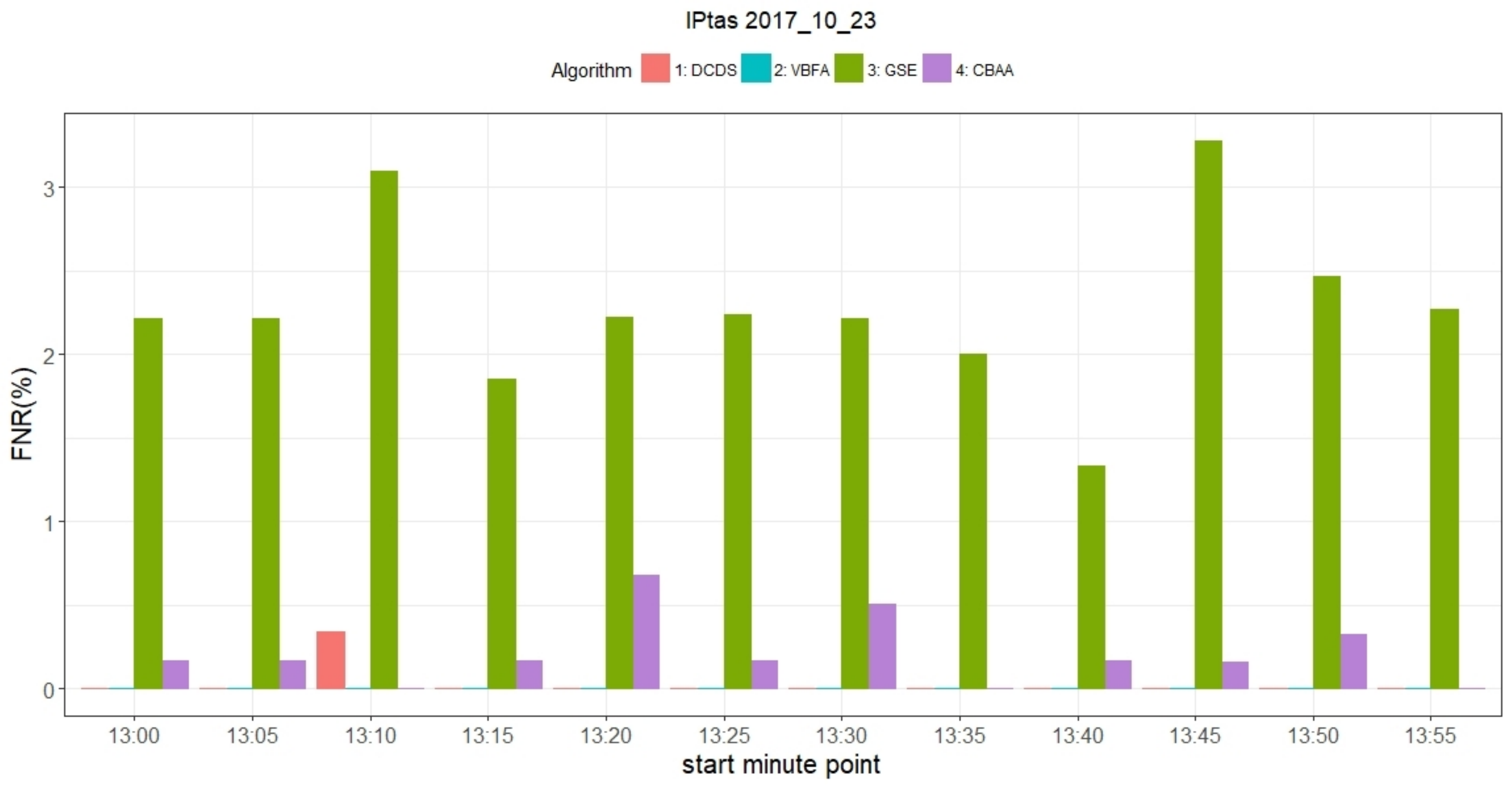}
\caption{False negative rate of different algorithms}
\label{fig_exp_rlt_compare_fnr}
\end{figure*}

\begin{figure*}[!ht]
\centering
\includegraphics[width=0.9\textwidth]{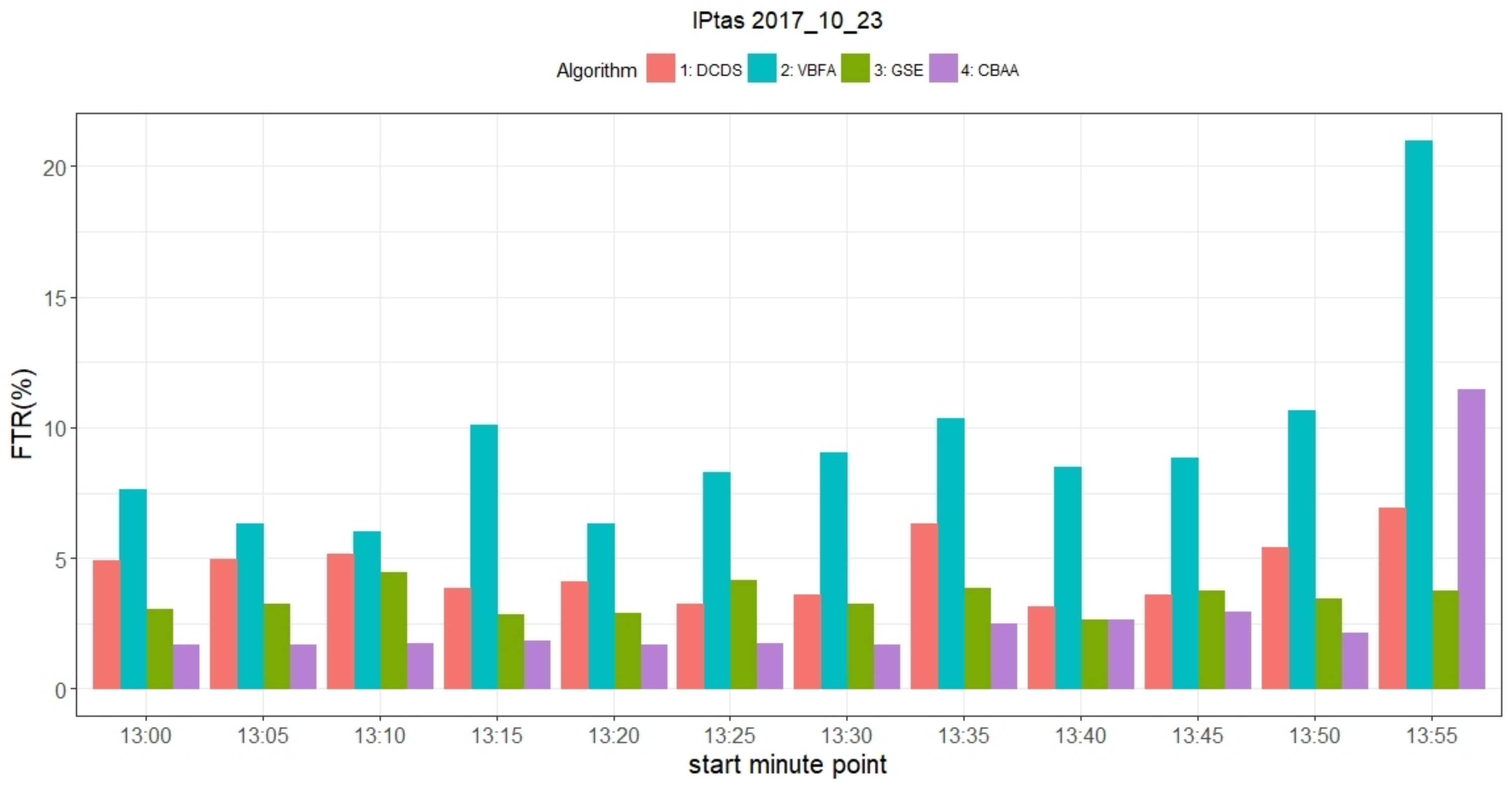}
\caption{False total rate of different algorithms}
\label{fig_exp_rlt_compare_ftr}
\end{figure*}

In our experiments, VBFA has the lowest FNR, but its FPR is the highest. VBFA has the highest FTR, and its average FTR is as high as $9.419\%$. DCDS also has a low FNR and a high FPR. But the FPR of DCDS is smaller than that of VBFA. Compared with VBFA, GSE has the lowest FPR and the highest FNR. The average FTR of GSE is $3.449\%$. 

The average FNR of CBAA is $0.211\%$ which is a little higher than that of DCDS and VBFA. But the average FTR of CBAA is the lowest which is only $2.82\%$.

\subsection{Using time and memory}
CBAA has not only a high accuracy but also the fastest super hosts restoring speed. At the end of each time window, CBAA restores super hosts from each cardinality sketch. Each cardinality sketch only contains a fraction of super hosts. There are fewer candidate super hosts in each cardinality sketch which makes sure that CBAA has the fastest super hosts restoring speed. Fig.\ref{fig_exp_rlt_compare_rstSPt} illustrates time consumption of different algorithms.
\begin{figure*}[!ht]
\centering
\includegraphics[width=0.9\textwidth]{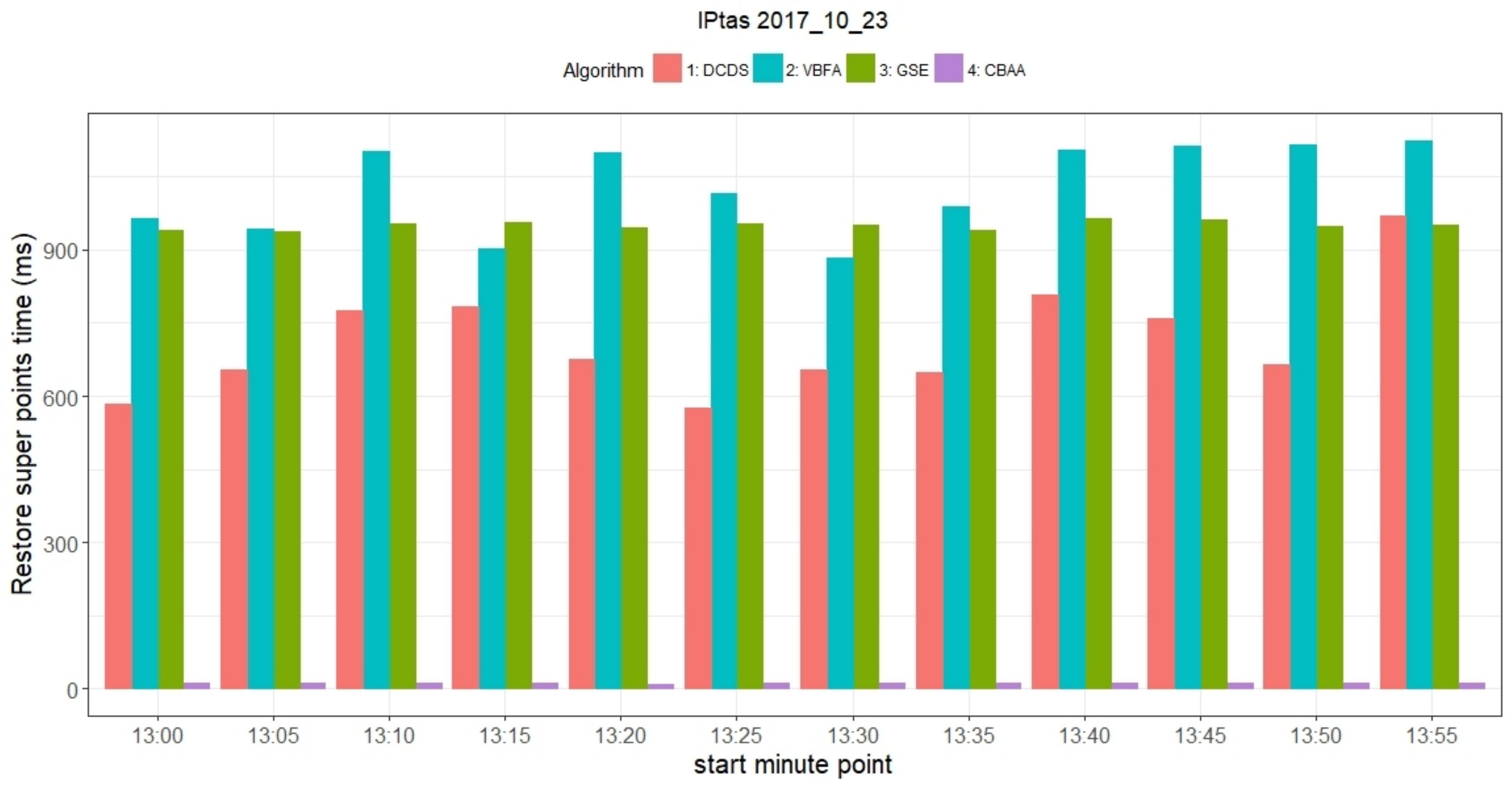}
\caption{Processing time of different algorithms}
\label{fig_exp_rlt_compare_rstSPt}
\end{figure*}

VBFA's time consumption is affected by super hosts number. When flow number and super hosts number are big, VBFA will generate and test huge candidate IPs. This will waste much time. In our experiment data, there are 599 super hosts in each time window on average. VBFA consumes average 1029.7 milliseconds to restore super hosts in each time window. DCDS adopts Chinese Remainder Theorem, which needs complex computation, to restore super hosts. Its speed is also lower. GSE needs to estimate the cardinality of each host to detect super hosts. Its super host restoring time is also very high. Our novel algorithm CBAA can mine super hosts with less than 11 milliseconds on average. With this fast speed, CBAA can be deployed on backbone network with a cheap GPU device.

In our experiments,DCDS consumed 384 MB of memory(H = 2,H+ = 1,m = $2^{16}$,v = $2^{14}$\cite{HSD:ADataStreamingMethodMonitorHostConnectionDegreeHighSpeed}), VBFA used  160 MB of memory(m = $2^{13}$\cite{HSD:DetectionSuperpointsVectorBloomFilter}) and GSE used 128 MB of memory(z=$2^{20}$,
 m=$2^{30}$
 \cite{HSD:GPU:2014:AGrandSpreadEstimatorUsingGPU}). For CBAA, we set r=4, $|RA|=3$, $|VA|=1$, $g=2^{12}$,$cbn(0)=cbn(1)=cbn(2)=cbn(3)=2^{12}$ and the memory consumption is 128 MB. CBAA has the highest accuracy and fastest speed with a small memory consumption. It has better performance than other algorithms.

\section{Conclusion}
We devise a fast super host detection algorithm, CBAA, for high-speed networks. No existing super hosts detection algorithm can acquire this accuracy and speed because of their complex super hosts restoring process. CBAA is suitable for distributed deploying and can be efficiently implemented on a GPU card. This advantage comes from the smart cardinality recording structure proposed in this paper. According to our experiments on a core network traffic, our algorithm can detect super hosts with the fastest speed and the highest accuracy among these state-of-the-art algorithms.

\section*{Acknowledgment}
The authors would like to thank anonymous reviewers. The research work leading to this article is supported by the National Natural Science Foundation of China under Grant No. 61602114

\iftoggle{ACM}{
\bibliographystyle{ACM-Reference-Format}
}
\iftoggle{IEEEcls}{
\bibliographystyle{IEEEtran}
}

\bibliography{..//..//ref} 

\end{document}